\newtheorem{theorem}{Theorem}[section]
\newtheorem{cor}{Corollary}
\newtheorem{lemma}[theorem]{Lemma}
\newtheorem{prop}{Proposition}[section]
\newtheorem{defin}{Definition}
\newcommand{\im}{{\normalfont im}}
\title{A lattice model with Fibonacci degree of degeneracy} 
\author{Athena Wang}
\begin{document}
\setstretch{1.5}
\maketitle

\begin{center}
    Abstract
\end{center}
In this paper, we explore two different methods of finding the degrees of degeneracy for lattice model systems, specifically constructing one with a Fibonacci degree of degeneracy. We also calculate the number of ground states per site as the golden ratio $(\phi)$ for the system that we constructed and extend our results to systems with $k-$Step Fibonacci degrees of degeneracy. Finally, I end with a few open questions that we may examine for future works.

\section{Introduction}

Lattice models, as defined within mathematical physics, provide a discrete, grid-like structure for depicting physical systems. Originating from condensed matter physics to model crystalline structures, these models have since pervaded numerous areas of theoretical physics due to their unique properties. The investigations of these models offer invaluable insights into fundamental phenomena, including phase transitions, magnetization, and scaling behavior, and contribute significantly to the comprehension of quantum field theory \cite{Baxter1982}. Moreover, they provide a practical means of approximating continuum theories, effectively introducing an ultraviolet cutoff to prevent divergences and facilitating numerical computations \cite{luscher1988there}.

In classical physics, lattice models are generally described by an energy function on the phase space, with the Ising model \cite{Selinger_2016} being one of the prototypical examples in this context. Quantum mechanics, meanwhile, with its peculiar characteristics of superposition and entanglement, offers a distinct approach to using lattice models. To mathematically encapsulate these unusual features, we must employ more advanced mathematical structures, such as Hilbert spaces and Hermitian operators, fundamental concepts that form the baseline of quantum theory. These more advanced structures also make it possible to describe richer phenomena, with ground-state degeneracy being an interesting example.

In a quantum lattice model, we have a finite-dimensional Hilbert space $V$ representing the spin on each site and a local interaction that involves only a finite number of nearby sites. The Hilbert space of the whole system is a tensor product of all the $V$s labeled by different sites (represented by $V^{\otimes n}$), and the Hamiltonian is the sum of a number of summands that apply a local transformation. An eigenvector with the lowest eigenvalue is called a ground state---or the state using the least amount of energy---and the dimension of the corresponding eigenspace is called the number of ground states, or degree of degeneracy. In this paper, we investigate an example such that the degree of degeneracy is the Fibonacci number $F_{n+1}$ (where $n$ is the size of the system). In particular, the average number of ground states per site is $\lim \sqrt[n]{F_{n+1}}=\frac{1+\sqrt{5}}{2}$.

More precisely, let us consider a lattice model on a line with length $n$, with each number $i=1,2,\cdots,n$ labeling a \textit{site} on which we have a \textit{local Hilbert space of states} $V_i\cong V=\langle|0>,|1>\rangle$, or the Hilbert space labeled by the bases (or qubits) $|0>$ and $|1>$. The \textit{space of states for the whole system} is $\otimes_i V_i\simeq V^{\otimes n}$. We will consider a local interaction for each pair of neighboring sites $i,i+1$ given by a Hermitian operator $H_{i,i+1}$ on $V_i\otimes V_{i+1}\simeq V^{\otimes 2}$, which maps $|11>$ to itself and all other standard basis vectors, $|00>,|01>,|10>$, to $0$. The Hamiltonian $H$ of the whole system is an operator on $V^{\otimes n}$ given by the sum of all local interactions $H_{i,i+1}$. In this paper, we specifically investigate the following property of this Hamiltonian $H$:

\begin{theorem}
     $H$ is a non-negative definite Hermitian operator, and the degree of degeneracy of the system (i.e. $\dim(\ker H$)) with Hermitian $H$ acting on $n$ sites is the Fibonacci number $F_{n+1}$.
\end{theorem}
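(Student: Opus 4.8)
The plan is to exploit the fact that each local term $H_{i,i+1}$ is, up to tensoring with identity operators on the remaining sites, the orthogonal projection onto the line $\mathbb{C}\,|11\rangle$ inside $V_i\otimes V_{i+1}$: it fixes $|11\rangle$ and annihilates the orthogonal complement spanned by $|00\rangle,|01\rangle,|10\rangle$. The first claim is then immediate: an orthogonal projection is a non-negative definite Hermitian operator, and a finite sum of such operators is again non-negative definite Hermitian, so $H=\sum_{i=1}^{n-1}H_{i,i+1}\ge 0$.

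For the degeneracy I would first reduce the problem to a simultaneous kernel. Writing $P_i:=H_{i,i+1}$, for every vector $v$ we have $\langle v,Hv\rangle=\sum_{i=1}^{n-1}\langle v,P_iv\rangle=\sum_{i=1}^{n-1}\|P_iv\|^2$, since each $P_i$ is a self-adjoint idempotent. Because the left-hand side vanishes exactly when $Hv=0$ (using $H\ge 0$), this forces $P_iv=0$ for all $i$, giving $\ker H=\bigcap_{i=1}^{n-1}\ker P_i$. This is the one step that genuinely uses non-negativity; the identity $\ker(\sum P_i)=\bigcap\ker P_i$ is false for general Hermitian summands.

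Next I would compute the intersection in the standard basis $\{\,|s\rangle : s\in\{0,1\}^n\,\}$. Since $P_i$ sends $|s\rangle$ to $|s\rangle$ when $s_i=s_{i+1}=1$ and to $0$ otherwise, it is diagonal in this basis, so $\ker P_i$ is the span of those $|s\rangle$ with $(s_i,s_{i+1})\ne(1,1)$. Intersecting over $i$, $\ker H$ is the span of all $|s\rangle$ such that the bit string $s$ has no two consecutive $1$'s, and hence $\dim\ker H$ equals the number $a_n$ of binary strings of length $n$ with no two adjacent $1$'s.

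Finally I would identify $a_n$ with a Fibonacci number. Classifying such strings by whether the last symbol is $0$ (the first $n-1$ symbols then form any admissible string) or $1$ (then the $(n-1)$-st symbol is forced to be $0$ and the first $n-2$ symbols are an arbitrary admissible string) yields the recurrence $a_n=a_{n-1}+a_{n-2}$ with $a_1=2$ and $a_2=3$; comparing initial conditions identifies $a_n=F_{n+1}$ in the normalization used in the statement. I do not expect a serious obstacle here: the only points demanding care are the justification that $\ker H$ admits a basis of standard basis vectors (which rests on each $H_{i,i+1}$ being diagonal in that basis) and keeping the Fibonacci indexing consistent with the convention of the theorem.
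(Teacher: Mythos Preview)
Your argument is correct and coincides with the paper's second proof (Section~3.2): both reduce the kernel computation to counting length-$n$ binary strings with no two consecutive $1$'s and then identify that count with $F_{n+1}$ via the recursion $a_n=a_{n-1}+a_{n-2}$. One small difference is that the paper simply observes $H$ is diagonal in the computational basis, with eigenvalue on $|s\rangle$ equal to the number of adjacent $11$ pairs in $s$, whereas your quadratic-form identity $\langle v,Hv\rangle=\sum_i\|P_iv\|^2$ establishes $\ker H=\bigcap_i\ker P_i$ without invoking simultaneous diagonalizability---a slightly more robust argument that would still apply when the local projectors do not commute. The paper also gives a first, longer proof (Section~3.1) via $\ker H=(\operatorname{im} H)^\perp$ together with a modular-arithmetic induction on which rows of the image vanish; that detour is unnecessary for the theorem but yields the side identity $F_{n+1}=2^n-\sum_{j=0}^{n-2}2^{n-2-j}F_j$.
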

To start, I will split the proof of \textbf{Theorem 1.1} into two parts: the first half, which proves that $H$ is a non-negative definite Hermitian operator and introduces the necessary structures, and the second half, which proves that the degree of degeneracy of the system is $F_{n+1}$.

\section{Hermitians and Hamiltonians}
In this section, I will focus on introducing the Hermitian and the Hamiltonian.

\begin{defin}
    A Hermitian matrix $A$ of a Hermitian operator $H$ has the property that $A^*=A$, where ``$A^*$" represents the conjugate transpose ($\overline{A^T}$) of $A$. Therefore, for every entry $a_{ij}$, representing the entry in row $i$ and column $j$ of matrix $A$, $a_{ij}= \overline{a_{ji}}$, making it an extension of symmetric matrices in the complex numbers.
\end{defin}

Hermitians, with their complex number entries describing different states and electron configurations of a transformation, are an integral part of quantum theory. Just like symmetric matrices, Hermitians simplify the parameterization of a transformation of a system over time.\\

With this definition, I can now check that $H_{i,i+1}$, which maps $|00>\mapsto 0$, $|01>\mapsto 0$, $|10> \mapsto 0$, and $|11>\mapsto|11>$, is a Hermitian.

Our Hermitian matrix $A$ would be
\vspace{3pt}

\begin{center}
    $\begin{bmatrix}
        0 & 0 & 0 & 0\\
        0 & 0 & 0 & 0\\
        0 & 0 & 0 & 0\\
        0 & 0 & 0 & 1\\
    \end{bmatrix}$
\end{center}
\vspace{3pt}
with the four entries on the diagonal ($[0, 0, 0, 1]$) representing exactly how the bases in the system interact with each other when the operator is applied to adjacent sites. Since $A^*=A$, operator $H_{i,i+1}$ is clearly Hermitian.
\begin{defin}
    The Hamiltonian $H=\sum_{i=1}^{n-1} H_{i,i+1}$, the sum of all local interactions $H_{i,i+1}$.
\end{defin}

For the specific Hamiltonian $H$ of our system, I can prove the following results.
\begin{prop}
    The matrix of the Hamiltonian is diagonal.
\end{prop}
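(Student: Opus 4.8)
The plan is to show that each local summand $H_{i,i+1}$, once regarded as an operator on the whole space $V^{\otimes n}$, is already diagonal in the standard (computational) basis, and then to conclude that their sum $H$ is diagonal. First I would fix notation: write the $2^n$ standard basis vectors of $V^{\otimes n}$ as $|b_1b_2\cdots b_n> := |b_1>\otimes|b_2>\otimes\cdots\otimes|b_n>$ with each $b_j\in\{0,1\}$, and recall that the symbol $H_{i,i+1}$ appearing in $H=\sum_{i=1}^{n-1}H_{i,i+1}$ really denotes the operator on $V^{\otimes n}$ that acts by the given local map on the tensor factors $V_i\otimes V_{i+1}$ and by the identity on every other factor, i.e. $\mathrm{id}^{\otimes(i-1)}\otimes H_{i,i+1}\otimes\mathrm{id}^{\otimes(n-i-1)}$.

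Next I would compute the action on a basis vector. Since every factor other than the $i$th and $(i+1)$th is fixed, and on those two the local map sends $|11>$ to $|11>$ and $|00>,|01>,|10>$ to $0$, we get
\[
H_{i,i+1}\,|b_1\cdots b_n> \;=\; \begin{cases} |b_1\cdots b_n> & \text{if } b_i=b_{i+1}=1,\\[2pt] 0 & \text{otherwise.}\end{cases}
\]
Hence the matrix of $H_{i,i+1}$ in the standard basis is diagonal, its $|b_1\cdots b_n>$ entry being $1$ exactly when sites $i$ and $i+1$ both hold $|1>$ and $0$ otherwise. (Equivalently, a tensor product of operators each diagonal in its own basis — here identity maps together with the rank-one projector onto $|11>$ — is diagonal in the tensor-product basis.)

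Finally, because a sum of diagonal matrices is diagonal, the matrix of $H=\sum_{i=1}^{n-1}H_{i,i+1}$ in the standard basis of $V^{\otimes n}$ is diagonal, with the entry attached to $|b_1\cdots b_n>$ equal to $\#\{\, i : 1\le i\le n-1,\ b_i=b_{i+1}=1 \,\}$, the number of adjacent pairs of sites both in state $|1>$. I would record these explicit diagonal entries rather than the bare diagonality claim, since they feed directly into what follows: all diagonal entries are nonnegative integers (so $H$ is nonnegative definite), and $\dim\ker H$ becomes the count of length-$n$ binary strings with no two consecutive $1$s, which is where the Fibonacci number enters. There is no genuine analytic obstacle here; the only points needing care are the bookkeeping of the embedding $H_{i,i+1}\hookrightarrow\mathrm{End}(V^{\otimes n})$ and the elementary fact that tensor products, and sums, of diagonal operators stay diagonal in the product basis.
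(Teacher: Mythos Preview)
Your proposal is correct and follows essentially the same route as the paper: you show that each embedded local operator $H_{i,i+1}$ sends every computational basis vector to a scalar multiple of itself (with eigenvalue $1$ when $b_i=b_{i+1}=1$ and $0$ otherwise), hence is diagonal, and then conclude that the sum $H$ is diagonal. Your version is a bit more explicit about the embedding $\mathrm{id}^{\otimes(i-1)}\otimes H_{i,i+1}\otimes\mathrm{id}^{\otimes(n-i-1)}$ and about the resulting diagonal entries, but the argument is the same as the paper's.
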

\textit{Remark. For this proof, I will be using a 1-1 correspondence between a basis vector with $n$ sites and a standard matrix with $2^n$ entries. For every qubit of the basis, there are two corresponding matrices. As an example,
\vspace{3pt}
$|000...00> \mapsto \begin{bmatrix}
    1\\
    0\\
    0\\
    ...\\
    0\\
\end{bmatrix}$
and $|000...01> \mapsto \begin{bmatrix}
    0\\
    1\\
    0\\
    ...\\
    0\\
\end{bmatrix}$
\vspace{3pt}
There are exactly $2^n$ total basis vectors with $n$ sites and $2^n$ standard matrices with $2^n$ entries.}
\begin{proof}
For a diagonal $2^n\times 2^n$ matrix $A$ with eigenvalues $a_1, a_2, a_3...a_{2^n}$, $Ae_i = a_ie_i$ for standard unit vectors $e_i$, represented by a 1 in the $ith$ row and $0$ in every other row of a column matrix.

Then, since the basis vector mappings of $V\otimes V$ are defined to be multiples of the vectors themselves (with $|00>\mapsto 0, |01>\mapsto 0, |10>\mapsto 0,$ and $|11>\mapsto |11>$),

\begin{center}
    $H_{i,i+1}e_j = \lambda_je_j$ for any $i=1, 2,...n-1$ and $j=1, 2,...2^n$. 
\end{center}

In fact, for an $e_j$ that corresponds to an $n$-site basis vector with the sites $i$ and $i+1$ represented by a 1, $\lambda_j = 1$ (since $|11>\mapsto|11>$). Otherwise, $\lambda_j=0$ (since all other bases map to $0$).

Therefore, the matrix $H_{i,i+1}$ would be diagonal with real entries (either 0 or 1). Since the sum of diagonal matrices will also be diagonal, matrix $H = H_{1,2} + H_{2,3}...+H_{n-1,n}$ will also be diagonal with real entries.
\end{proof}
As diagonal matrices are symmetric, it is then apparent that the following corollaries are true.
\begin{cor}
    $H$ is symmetric.
\end{cor}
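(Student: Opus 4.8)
The plan is to read this off directly from Proposition 2.1. First I would recall that Proposition 2.1 establishes that, in the ordered standard basis of $V^{\otimes n}$ fixed in the Remark preceding it, the matrix of $H$ is diagonal, and moreover that its diagonal entries are real (each entry is a sum of the $0/1$ eigenvalues contributed by the local terms $H_{i,i+1}$, hence an integer in $\{0,1,\dots,n-1\}$). Everything then reduces to the elementary observation that a diagonal matrix equals its own transpose.

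Concretely, I would let $A$ denote the matrix of $H$ in this basis and argue that the $(i,j)$ entry of $A^{T}$ is the $(j,i)$ entry of $A$, which vanishes when $i \neq j$ because $A$ is diagonal, and equals the $(i,i)$ entry of $A$ when $i = j$; hence $A^{T} = A$, i.e. $H$ is symmetric. Since the diagonal entries are real, the same one-line computation gives $A^{*} = \overline{A^{T}} = \overline{A} = A$, so in fact $H$ is Hermitian as well — symmetry being just the real shadow of that statement, recorded here because it is what the later Hermiticity claim in Theorem 1.1 rests on.

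I do not expect a genuine obstacle here: the content is entirely carried by Proposition 2.1, and the corollary is a formal consequence. The only thing to be careful about is that ``symmetric'' and ``diagonal'' are basis-dependent notions, so I would make explicit that the claim is about the matrix in the standard basis introduced earlier, and keep that convention fixed throughout so that no ambiguity arises when this corollary is invoked downstream.
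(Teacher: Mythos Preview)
Your proposal is correct and matches the paper's own approach exactly: the paper simply remarks that diagonal matrices are symmetric and reads the corollary off Proposition 2.1, just as you do. Your extra care about basis-dependence and the side remark that the real diagonal entries already give $A^{*}=A$ are fine elaborations, but the core argument is the same one-line deduction.
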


\begin{cor}
    $H$ is a non-negative definite system.
\end{cor}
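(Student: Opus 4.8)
The plan is to read the conclusion off directly from Proposition 2.1 together with the standard fact that an Hermitian operator is non-negative definite if and only if all of its eigenvalues are non-negative (equivalently, $\langle v, Hv\rangle \ge 0$ for every $v \in V^{\otimes n}$). Since Proposition 2.1 shows that the matrix of $H$ is diagonal, its eigenvalues are exactly its diagonal entries, so the whole claim reduces to checking that those entries are $\ge 0$.

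First I would recall from the proof of Proposition 2.1 the explicit description of each summand: $H_{i,i+1}$ is diagonal, and its $j$-th diagonal entry equals $1$ when the basis vector $e_j$ has a $1$ in both positions $i$ and $i+1$ and equals $0$ otherwise. Adding over $i = 1, \dots, n-1$, the $j$-th diagonal entry of $H$ is the number of adjacent pairs of $1$'s in the length-$n$ bit string labelling $e_j$; this is a non-negative integer (lying between $0$ and $n-1$). Hence every eigenvalue of $H$ is $\ge 0$, and $H$ is non-negative definite.

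I would also mention the cleaner coordinate-free variant, which does not even use the diagonalization: the displayed matrix of $H_{i,i+1}$ is the orthogonal projection $P_{i,i+1}$ onto the span of $|11\rangle$ in the two relevant tensor factors, so $\langle v, H_{i,i+1}v\rangle = \|P_{i,i+1}v\|^2 \ge 0$, and summing gives $\langle v, Hv\rangle = \sum_{i=1}^{n-1}\langle v, H_{i,i+1}v\rangle \ge 0$. Either way there is essentially no obstacle here: the corollary is a repackaging of Proposition 2.1 and Corollary 2.3, and the only points needing care are stating the eigenvalue criterion for non-negative definiteness precisely and quoting the entrywise description of $H_{i,i+1}$ correctly.
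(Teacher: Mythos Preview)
Your argument is correct and matches the paper's own proof: both read off from Proposition~2.1 that the diagonal entries of $H$ are sums of the $0/1$ diagonal entries of the $H_{i,i+1}$, hence non-negative, so every eigenvalue of $H$ is $\ge 0$. The projection remark you add at the end is a nice bonus but does not appear in the paper.
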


\textit{Remark. ``Non-negative definite systems" are often called ``positive semi-definite systems" in some texts.}

\begin{proof}
From \textbf{Proposition 2.1}, I proved that every $\lambda_j$, or eigenvalue of $H_{i,i+1}$, is either 0 or 1 in our system with Hamiltonian $H$. Since both 0 and 1 are non-negative and the sum of non-negative integers are also non-negative, all the entries on the diagonal---or all the eigenvalues of $H$---will be non-negative. Therefore, it will be a non-negative definite system.
\end{proof}

For any non-negative definite system, we call it a \textit{well-behaved quantum system}.

Now that we understand that Hamiltonian $H$ is a non-negative definite system, I only need to check that $H$ is Hermitian.

\begin{cor}
    $H$ is Hermitian.
\end{cor}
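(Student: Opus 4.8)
The plan is straightforward: I want to show that $H$ is Hermitian, which by Definition 2.1 means verifying $A^* = A$ for the matrix $A$ representing $H$ in the standard basis. The key observation I would use is Proposition 2.1, which has already established that the matrix of $H$ is diagonal with real entries (each entry being a sum of the $0/1$ eigenvalues $\lambda_j$ of the local interactions $H_{i,i+1}$). So the proof reduces to the elementary fact that any diagonal matrix with real entries is Hermitian.

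Concretely, I would argue as follows. Write $A$ for the $2^n \times 2^n$ matrix of $H$. By Proposition 2.1, $a_{ij} = 0$ whenever $i \neq j$, and each diagonal entry $a_{ii}$ is a non-negative integer (hence real), as noted in the proof of Corollary 2.3. Now check the defining condition $a_{ij} = \overline{a_{ji}}$ entry by entry: if $i \neq j$, then both $a_{ij}$ and $a_{ji}$ are $0$, so $a_{ij} = 0 = \overline{0} = \overline{a_{ji}}$; if $i = j$, then $a_{ii}$ is real, so $a_{ii} = \overline{a_{ii}} = \overline{a_{ii}}$ trivially. Therefore $A^* = \overline{A^T} = A$, and $H$ is Hermitian.

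I should also remark that this can be phrased more conceptually: each $H_{i,i+1}$ is Hermitian (already checked explicitly in Section 2 via its $4\times 4$ matrix, extended by identity on the remaining tensor factors), and a finite sum of Hermitian operators is Hermitian, since conjugate transposition is additive: $(\sum_i H_{i,i+1})^* = \sum_i H_{i,i+1}^* = \sum_i H_{i,i+1}$. Either route works; I would lead with the diagonal-matrix argument since Proposition 2.1 makes it immediate, and mention the additivity argument as an alternative.

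I do not anticipate any real obstacle here — this corollary is essentially a bookkeeping consequence of Proposition 2.1. The only thing to be careful about is making sure the correspondence between the operator $H$ and its $2^n \times 2^n$ matrix is the one fixed in the Remark after Proposition 2.1, so that "diagonal with real entries" literally translates into the entrywise condition $a_{ij} = \overline{a_{ji}}$ of Definition 2.1. Once that is in place, the conclusion is immediate and completes the first half of Theorem 1.1, namely that $H$ is a non-negative definite Hermitian operator.
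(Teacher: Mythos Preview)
Your argument is correct and essentially matches the paper's: the paper uses Corollary~1 ($H$ is symmetric, itself a consequence of Proposition~2.1) together with the real-entries observation to conclude $H$ is Hermitian, while you go directly from Proposition~2.1 (diagonal with real entries) to the entrywise Hermitian condition---the same content with one fewer named intermediate step. Your additional remark about additivity of conjugate-transpose over the sum $\sum_i H_{i,i+1}$ is a valid alternative that the paper does not mention.
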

\begin{proof}
    Note that if $A$ was Hermitian and its entry $a_{ij}\in \mathbb{R}$, $a_{ij}=\overline{a_{ji}}=a_{ji}$. Hence if all entries of $A\in\mathbb{R}$, $A$ would be a symmetric matrix, which is also Hermitian. Since $H$ is symmetric (\textbf{Corollary 1}) with real entries, system $H$ would also be Hermitian.
\end{proof}

As a result, the first half of the theorem is proved.

Hamiltonian $H=\sum_{i=1}^{n-1}H_{i,i+1}$ where $H_{i,i+1}$ are Hermitian operators that map $|11>$ to itself and all other standard basis vectors to 0 is a non-negative definite Hermitian operator.

\section{Degree of Degeneracy of $H$}
We will now examine the interesting results given by analyzing the degree of degeneracy of the system $H$.

In order to investigate this problem, I will present two different approaches: one using modular arithmetic and matrices and the other using recursion and qubit basis states. By displaying both, I offer a glimpse into the surprisingly diverse connections between quantum computing and other fields of math.

Let us first define the tensor product, a fundamental concept in the following two approaches.

\begin{defin}
    A tensor product between two matrices is a combination of two states interacting with each other. It is represented by the symbol, $\otimes$.
\end{defin}

\begin{defin}
    A ground state is the lowest eigenspace of a system.
\end{defin}

\begin{defin}
    The degree of degeneracy of a system $H$ is the number of ground states of the system. In this case, it will be the same as $\dim(\ker H)$.
\end{defin}

\subsection{Modular Arithmetic and Matrices.}

I will first define the tensor product in terms of matrices.

\begin{defin}
    The tensor product displayed in matrices is also sometimes called the Kronecker product \cite{Broxson2006-pe}, which is characterized by 
    \begin{center}
    $\begin{bmatrix}
    a_{1,1} & a_{1,2} & ... & a_{1,n}\\
    a_{2,1} & a_{2,2} & ... & a_{2,n}\\
    ... & ... & ... & ...\\
    a_{m,1} & a_{m,2} & ... & a_{m,n}\\
\end{bmatrix} \otimes B =
\begin{bmatrix}
    a_{1,1}B & a_{1,2}B & ... & a_{1,n}B\\
    a_{2,1}B & a_{2,2}B & ... & a_{2,n}B\\
    ... & ... & ... & ...\\
    a_{m,1}B & a_{m,2}B & ... & a_{m,n}B\\
\end{bmatrix}$
\end{center}

where $B$ is another matrix.
\end{defin}

\textit{Remark. If $A$ is an $m \times n$ matrix and $B$ is a $j \times k$ matrix, $A\otimes B$ is a $mj \times nk$ matrix.}\\

Before, we defined that Hamiltonian $H=\sum_{i=1}^{n-1} H_{i,i+1}$. Therefore, $Hv=\sum_{i=1}^{n-1} H_{ii,i+1}v$ where vector $v\in V^{\otimes n}$, and im($H$) = $\sum_{i=1}^{n-1}$ im$(H_{i,i+1})$. Thus, let's take a look at $H_{i,i+1}$ individually.

Because $H_{i,i+1}$ only acts on adjacent sites $i, i+1$, im$(H_{i,i+1})$ is spanned by
\begin{center}
    $<v_1 \otimes v_2 \otimes v_3..v_{i-1}\otimes A(v_i \otimes v_{i+1}) \otimes v_{i+2}...\otimes v_n>$.
\end{center} where $A$ is the matrix of the Hermitian operator.

Note that every vector (or state) $v_i$ can be represented as
$\begin{bmatrix}
    a_0\\
    a_1\\
\end{bmatrix}$
with the $a_0$ representing the probability of the electron being in basis state $|0>$ and $a_1$ representing the probability of the electron being in basis state $|1>$.

Because all $v_i$ are $2\times 1$ matrices, each \im($H_{ii+1})$ can be represented as a $2^n \times 1$ column matrix, where each row corresponds with a basis element of the system.

For example,
\begin{center}
$\begin{bmatrix}
    a_0\\
    a_1\\
\end{bmatrix} \otimes \begin{bmatrix}
    b_0\\
    b_1\\
\end{bmatrix} = \begin{bmatrix}
    a_0b_0\\
    a_0b_1\\
    a_1b_0\\
    a_1b_1\\
\end{bmatrix}$
\end{center}
where $a_0$ and $b_0$ correspond with the basis element $|0>$, $a_1$ and $b_1$ correspond with the basis element $|1>$, $a_0b_0$ corresponds with $|00>$, $a_0b_1$ corresponds with $|01>$, $a_1b_0$ corresponds with $|10>$, and $a_1b_1$ corresponds with $|11>$.

In system $H$, finding the degree of degeneracy is the same as finding $\dim(\ker H)$, as the system is non-negative definite (which means the total energy of the system cannot go below 0) and I can find a state such that the total energy of the system is 0.

I can now use a simple, well-known result in linear algebra that proves that $ker(H)=$\im$(H)^\bot$ \cite{Winters2023}.
\begin{theorem}
    For any matrix $A: \mathbb{R}^n\mapsto \mathbb{R}^m$, \im$(A)^\bot=\ker (A^T)$.
\end{theorem}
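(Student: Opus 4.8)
The plan is to prove the two subspaces are equal by establishing mutual inclusion, using the defining adjoint property of the transpose. The crucial ingredient is the identity $\langle Ax, y\rangle = \langle x, A^T y\rangle$, valid for all $x \in \mathbb{R}^n$ and all $y \in \mathbb{R}^m$ with respect to the standard dot products; this is precisely what it means for $A^T$ to be the adjoint of $A$, and it follows immediately by writing both sides as $\sum_{i,j} a_{ij}\, x_j\, y_i$.

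First I would take $y \in \ker(A^T)$, so that $A^T y = 0$. Then for every $x \in \mathbb{R}^n$ we have $\langle Ax, y\rangle = \langle x, A^T y\rangle = \langle x, 0\rangle = 0$, so $y$ is orthogonal to every vector of the form $Ax$, i.e. to all of $\im(A)$; hence $y \in \im(A)^\bot$. This gives the inclusion $\ker(A^T) \subseteq \im(A)^\bot$.

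Conversely, suppose $y \in \im(A)^\bot$. Then $\langle Ax, y\rangle = 0$ for every $x \in \mathbb{R}^n$, and the adjoint identity rewrites this as $\langle x, A^T y\rangle = 0$ for every $x$. Since the only vector orthogonal to all of $\mathbb{R}^n$ is the zero vector (substitute $x = A^T y$ to force $\|A^T y\|^2 = 0$), we conclude $A^T y = 0$, i.e. $y \in \ker(A^T)$. Combining the two inclusions yields $\im(A)^\bot = \ker(A^T)$.

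I do not expect a genuine obstacle here, as the statement is a standard fact; the only point needing a little care is the non-degeneracy step at the end — that $\langle x, z\rangle = 0$ for all $x$ forces $z = 0$ — which over $\mathbb{R}$ is immediate by taking $x = z$. In the intended application $A = H$ is symmetric by Corollary 1, so $A^T = A$ and the identity specializes to $\ker H = \im(H)^\bot$, which is the form actually used to compute $\dim(\ker H)$.
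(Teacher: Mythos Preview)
Your proof is correct and is essentially the same argument as the paper's: both establish that $y \perp \im(A)$ is equivalent to $A^T y = 0$. The only cosmetic difference is that the paper phrases this column-by-column (writing $A=[v_1\,|\,\cdots\,|\,v_n]$ so that $y\perp\im(A)\iff v_i\cdot y=0$ for all $i\iff A^T y=0$), whereas you package the same computation into the adjoint identity $\langle Ax,y\rangle=\langle x,A^T y\rangle$; your version is slightly more portable to general inner product spaces, but the content is identical.
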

\begin{proof}
    Suppose that \im($A)=$ span$\{v_1, v_2,...v_n\}$. Then, $A=\begin{bmatrix}
        \uparrow & \uparrow & ... & \uparrow\\
        v_1 & v_2 & ... & v_n\\
        \downarrow & \downarrow & ... & \downarrow\\
    \end{bmatrix}$.
    
    If vector $x\in$ \im$(A)^\bot$,
    
    \begin{center}
    $\begin{Bmatrix}
        v_1\cdot x=0\\
        v_2\cdot x=0\\
        ...\\
        v_n\cdot x=0\\
    \end{Bmatrix} \Leftrightarrow \begin{bmatrix}
        \leftarrow & v_1 & \rightarrow\\
        \leftarrow & v_2 & \rightarrow\\
        ... & ... & ...\\
        \leftarrow & v_n & \rightarrow\\
    \end{bmatrix}$
    $\begin{bmatrix}
        \uparrow\\
        x\\
        \downarrow\\
    \end{bmatrix}=A^Tx=0$
    \end{center}

    which means that $x\in\ker(A^T)$. The reverse argument can be made for the converse. Therefore, \im$(A)^\bot=\ker(A^T)$.
\end{proof}

\begin{cor}
    If matrix $A$ is symmetric, \im$(A)^\bot=\ker(A)$.
\end{cor}
\begin{proof}
Because $A$ is symmetric, $A^T=A$. Hence, \im$(A)^\bot=\ker(A)$.
\end{proof}
Since $H$ is symmetric (\textbf{Corollary 1}), $\ker(H)$=\im$(H)^\bot$ (\textbf{Corollary 4}).

Therefore, to find the dimension of the kernel of the system, I can find the number of orthonormal basis vectors $x$ such that \im$(H)x=0$. Note that all such vectors are standard column matrices, with one row having a value of 1 and the rest having a value of 0.

If \im$(H)$ is spanned by $\begin{bmatrix}
    r_0\\
    r_1\\
    ...\\
    r_{2^n}\\
\end{bmatrix}$, where every $r_j$ for $j=1,2,...2^n$ is non-negative, each vector $x\in$ \im($H)^\bot$ is a standard basis vector with 1 in row $r_j$ if $r_j=0$ in \im($H)$.

Without ideas regarding dim(\im$(H))$, I can start with smaller cases. Let's denote rows that are not 0 as ``-''. Note that $v$ is a vector, so I will denote its values corresponding to bases $|0>$ and $|1>$ as ``-" too. Thus, $v^{\otimes n}$ is a $2^n\times 1$ column matrix where every entry is ``-".
\begin{center}
Let $n=2$. Then, \im$(H) = A(v \otimes v) = \begin{bmatrix}
    0 & 0 & 0 & 0\\
    0 & 0 & 0 & 0\\
    0 & 0 & 0 & 0\\
    0 & 0 & 0 & 1\\
\end{bmatrix}\begin{bmatrix}
    -\\
    -\\
    -\\
    -\\
\end{bmatrix}=\begin{bmatrix}
    0\\
    0\\
    0\\
    -\\
\end{bmatrix}$ 
\end{center}
So, the basis for the kernel of this matrix will be
\begin{center}
    $\begin{bmatrix}
    1\\
    0\\
    0\\
    0
\end{bmatrix}, \begin{bmatrix}
    0\\
    1\\
    0\\
    0
\end{bmatrix}$, and $\begin{bmatrix}
    0\\
    0\\
    1\\
    0
\end{bmatrix}$.
\end{center}

This kernel has a dimension of 3, exactly the same as the number of 0s in the matrix.
\begin{center}
For $n=3$, \im$(H_{1,2}) = A(v \otimes v) \otimes v = \begin{bmatrix}
0 \\
0 \\
0 \\
- \\
\end{bmatrix} \otimes \begin{bmatrix}
    -\\
    - \\
\end{bmatrix} = \begin{bmatrix}
    0\\
    0\\
    0\\
    0\\
    0\\
    0\\
    -\\
    -\\
\end{bmatrix}$

\im$(H_{2,3}) = v \otimes A(v \otimes v) = $$\begin{bmatrix}
    -\\
    -\\
\end{bmatrix} \otimes \begin{bmatrix}
    0\\
    0\\
    0\\
    -\\
\end{bmatrix} = \begin{bmatrix}
    0\\
    0\\
    0\\
    -\\
    0\\
    0\\
    0\\
    -\\
\end{bmatrix}$

\im($H)=$\im$(H_{1,2})+$\im($H_{2,3}) = \begin{bmatrix}
    0\\
    0\\
    0\\
    -\\
    0\\
    0\\
    -\\
    -\\
\end{bmatrix}$.
\end{center}
Thus, since there are 5 rows of 0 in the matrix, $\dim (\ker(H))=5$. Indeed, the basis for the kernel of the matrix is
\begin{center}
    $\begin{bmatrix}
    1\\
    0\\
    0\\
    0\\
    0\\
    0\\
    0\\
    0\\
\end{bmatrix}, \begin{bmatrix}
    0\\
    1\\
    0\\
    0\\
    0\\
    0\\
    0\\
    0\\
\end{bmatrix}$, $\begin{bmatrix}
    0\\
    0\\
    1\\
    0\\
    0\\
    0\\
    0\\
    0\\
\end{bmatrix}$, $\begin{bmatrix}
    0\\
    0\\
    0\\
    0\\
    1\\
    0\\
    0\\
    0\\
\end{bmatrix}$, and $\begin{bmatrix}
    0\\
    0\\
    0\\
    0\\
    0\\
    1\\
    0\\
    0\\
\end{bmatrix}$.
\end{center}

So, in general, \im$(H_{ii+1})$ can be spanned by
\begin{center}
($v^{\otimes (i-1)})\otimes$$\begin{bmatrix}
    0\\
    0\\
    0\\
    -\\
\end{bmatrix}$ (or $A(v\otimes v$)) $\otimes$ ($v^{\otimes (n-i-1)}$).
\end{center}

The result is a $2^n\times 1$ column matrix, constructed by repeating a $2^{n-i+1}\times 1$ column matrix with a value of 0 in the first $\frac{3}{4}$ rows and 1 in the last $\frac{1}{4}$ rows $2^{i-1}$ times.

Since \im$(H)=\sum_{i=1}^{n-1}$ \im$(H_{i,i+1})$, every row with a value of 1 in the matrix spanning \im($H_{i,i+1})$ for some $i=1, 2,...n-1$ would not have a value of 0 in the final matrix spanning im($H$). Thus, \im$(H)$ will be spanned by a column matrix where non-negative values occur at rows $0, -1,...-(x-1) \pmod{4x}$ for all positive integers $x\leq \frac{2^n}{4}$ if the rows of the matrix spanning \im$(H)$ are labeled from 1 through $2^n$. There will always be some $i=1, 2,...n-1$ such that the repeated $2^{n-i+1}$ matrix has a dimension of $4x$ for a fixed $x\leq \frac{2^n}{4}$.

This generalization can also be viewed from another perspective: rather than using complementary counting, I can count the number of 0s in the matrix spanning \im($H$) directly. The 0s in the final matrix must satisfy $1, 2,,,,3x\pmod{4x}$ for all positive integers $x\leq \frac{2^n}{4}$.

Now, I have all the tools to prove the final half of \textbf{Theorem 1.1}.

\begin{prop}
The dimension of the kernel, or the degree of degeneracy, of the system $H$ is $F_{n+1}$ for a given $n$. Here, $F_0 = 1, F_1 = 1, F_2 = 2,$ etc.
\end{prop}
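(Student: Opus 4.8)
The plan is to first identify $\ker H$ combinatorially, then count its basis by a Fibonacci recursion, and finally match that recursion to the stated indexing. By \textbf{Proposition 2.1} the matrix of $H$ is diagonal: for the standard basis vector $e_j$ corresponding to the $n$-qubit state $|s_1 s_2 \cdots s_n\rangle$ with each $s_k \in \{0,1\}$, we have $H_{i,i+1} e_j = \lambda_j e_j$ with $\lambda_j = 1$ iff $s_i = s_{i+1} = 1$ and $\lambda_j = 0$ otherwise, so that $H e_j = c_j\, e_j$ where $c_j$ is the number of indices $i \in \{1,\dots,n-1\}$ with $s_i = s_{i+1} = 1$. Hence $e_j \in \ker H$ precisely when $c_j = 0$, i.e. when the bit string $s_1\cdots s_n$ has no two consecutive $1$'s; equivalently, in the language of $\im(H)$ developed above, the zero rows of the column matrix spanning $\im(H)$ are exactly those indexed by such ``no-$11$'' strings. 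Therefore $\dim(\ker H) = a_n$, where $a_n$ denotes the number of binary strings of length $n$ containing no substring $11$.

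Next I would establish $a_n = a_{n-1} + a_{n-2}$ for $n \ge 2$. A valid string of length $n$ either ends in $0$, in which case deleting the last bit yields an arbitrary valid string of length $n-1$, or ends in $1$, in which case the preceding bit must be $0$ and deleting the final $01$ yields an arbitrary valid string of length $n-2$; these two cases are disjoint and exhaustive, and the deletion maps are bijections onto their respective targets. For the base cases, $a_0 = 1$ (the empty string) and $a_1 = 2$ (the strings $0$ and $1$), and as a consistency check $a_2 = 3$ and $a_3 = 5$, matching the $n=2$ and $n=3$ computations already carried out in the text.

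Finally, with the indexing $F_0 = 1$, $F_1 = 1$, $F_2 = 2$ (so $F_k = F_{k-1} + F_{k-2}$ for $k \ge 2$), set $b_n := F_{n+1}$. Then $b_0 = F_1 = 1 = a_0$, $b_1 = F_2 = 2 = a_1$, and $b_n = F_{n+1} = F_n + F_{n-1} = b_{n-1} + b_{n-2}$ for $n \ge 2$. Since $a_n$ and $b_n$ obey the same second-order recursion with the same two initial values, a straightforward induction on $n$ yields $a_n = b_n = F_{n+1}$ for all $n \ge 0$, which is exactly the assertion of the proposition.

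I expect the only real subtlety to be bookkeeping rather than ideas: making the identification of the zero rows of $\im(H)$ with the no-$11$ strings fully rigorous (this is essentially a repackaging of the $\bmod 4x$ description already given), and getting the Fibonacci index shift right so that the base cases line up — an off-by-one there is the easiest place to slip, though the text's own $n=2,3$ calculations ($a_2 = 3 = F_3$, $a_3 = 5 = F_4$) serve as a check.
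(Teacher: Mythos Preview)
Your argument is correct, but it is not the proof the paper attaches to Proposition~3.1. The paper's proof there works entirely through the modular-arithmetic description of $\im(H)$: it does induction on $n$ by tracking which rows of the $2^n\times 1$ column satisfy the conditions $1,2,\dots,3x\pmod{4x}$ for $x\le 2^{n-2}$, and shows that passing from $n=k$ to $n=k+1$ adds exactly the zero rows counted by the case $n=k-1$, yielding $F_{k+2}=F_{k+1}+F_k$. Your route instead bypasses the row-index bookkeeping and goes straight to the combinatorial identification $\ker H \leftrightarrow \{\text{no-}11\text{ strings}\}$ via the diagonal form of $H$ from Proposition~2.1, then runs the standard last-bit recursion. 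This is precisely the paper's \emph{second} approach, presented in Section~3.2 (``Recursion and Qubit Basis States''), so you have independently reproduced that alternative rather than the modular proof. The trade-off is clear: the modular proof stays inside the $\im(H)$ framework the paper has been building and later feeds into Theorem~3.2, while your combinatorial argument is shorter, more transparent, and generalises immediately to the $k$-step case in Section~5 --- exactly the advantages the paper itself points out when it remarks that the second approach ``provides a much more efficient result.''
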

\begin{proof}
We will begin by using induction on $n$, the number of sites of the system $H$.

Base Case: When $n=2$, $x\leq \frac{2^2}{4}=1$. Then, 0 is in rows $1, 2$, or $3 \pmod {4}$. There are 3 zeroes in the matrix spanning \im$(H)$, which means that $\dim(\ker H) = 3$. This supports our inductive hypothesis.

Inductive Step: Suppose that the degree of degeneracy of $H$ when $n=k-1$ is $F_k$ and when $n=k$ is $F_{k+1}$. Then, when $n=k+1$, $x=2^{k-1}$ adds a new modular equation to satisfy, where 0s can only exist on rows that are $1, 2,...3\cdot2^{k-1}\pmod{2^{k+1}}$. Since $3\cdot2^{k-2}<3\cdot 2^{k-1}$, all the rows of 0s that satisfied the conditions of $n=k$ also satisfy the conditions of $n=k+1$. The rest of the possible rows of 0s, labeled from $2^k+1$ to $3\cdot 2^{k-1}$, if put in $\pmod {2^k},$ will just be $1, 2, 3...2^{k-1}$. Since $2^{k-1} < 3\cdot 2^{k-2}$, the numbers already satisfied the condition required for the case when $n=k$, and I only need to check the conditions for $n=k-1$. Thus, I am checking for the numbered rows that satisfy $1, 2,...3x\pmod{4x}$ when $x\leq \frac{2^{k-1}}{4}$. This is equivalent to checking the case of $n=k-1$. Therefore, since (the number of 0s in the matrix spanning \im($H)$ when $n=k+1)=$ (the number of 0s in the matrix spanning \im($H)$ when $n=k$)+(the number of 0s in the matrix spanning \im($H)$ when $n=k-1$), $F_{n+1}+F_n=F_{n+2}$, which is $\dim(\ker H)$, or the degree of degeneracy of system $H$ when it has $n+1$ sites. This supports our inductive hypothesis.

Hence, our induction is complete.
\end{proof}
By using \textbf{Corollary 3} and \textbf{Proposition 3.1}, I have now proved
\textbf{Theorem 1.1}.

However, there is another interesting result that I can obtain by using \textbf{Proposition 3.1}.

\begin{theorem}
    $F_{n+1}=2^n-(2^{n-2}F_0+2^{n-3}F_1+...2^0F_{n-2})$ when $n\geq 2$.
\end{theorem}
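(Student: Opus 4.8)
The plan is to derive Theorem 3.2 from Proposition 3.1 together with the complementary count $\dim(\ker H) + \dim(\im H) = 2^n$ (rank--nullity, which here just says that the number of zero rows plus the number of nonzero rows in the spanning column matrix for $\im H$ is $2^n$). By Proposition 3.1, $\dim(\ker H) = F_{n+1}$, so $\dim(\im H) = 2^n - F_{n+1}$, and Theorem 3.2 becomes the purely numerical claim
\[
  2^n - F_{n+1} = 2^{n-2}F_0 + 2^{n-3}F_1 + \cdots + 2^0 F_{n-2} \qquad (n \ge 2).
\]
Write $S_n$ for the right-hand side and $b_n := 2^n - F_{n+1}$; the goal is $b_n = S_n$.

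First I would peel off the last summand of $S_n$ to get the one-step recursion $S_n = 2S_{n-1} + F_{n-2}$ for $n \ge 3$, with $S_2 = 2^0 F_0 = 1$. Then I would show $b_n$ satisfies the very same recursion. The one ingredient needed is the elementary Fibonacci identity $F_{n+1} = 2F_n - F_{n-2}$, which follows at once from $F_{n+1} = F_n + F_{n-1}$ and $F_{n-1} = F_n - F_{n-2}$ (valid for $n \ge 2$). Using it, $b_n = 2 \cdot 2^{n-1} - (2F_n - F_{n-2}) = 2(2^{n-1} - F_n) + F_{n-2} = 2 b_{n-1} + F_{n-2}$. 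Since $b_2 = 4 - F_3 = 1 = S_2$ and both sequences satisfy the same first-order recursion with the same initial term, induction on $n$ gives $b_n = S_n$ for all $n \ge 2$; this is exactly Theorem 3.2.

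A second, bijective route avoids Proposition 3.1 and instead computes $\dim(\im H) = S_n$ on the nose. As in the analysis of $\im H$ preceding Proposition 3.1, $\dim(\im H)$ is the number of standard basis vectors $|s_1 s_2 \cdots s_n\rangle$ with $s_i = s_{i+1} = 1$ for at least one $i$, i.e. the number of length-$n$ binary strings containing the block $11$. Sorting such a string by the position $p \in \{1, \dots, n-1\}$ of its \emph{first} occurrence of $11$: the bit $s_{p-1}$ is then forced to $0$ (when $p \ge 2$); the prefix $s_1 \cdots s_{p-2}$ is an arbitrary $11$-free string of length $p-2$, of which there are $F_{p-1}$ in the paper's indexing; and the suffix $s_{p+2} \cdots s_n$ is completely unconstrained, giving $2^{n-p-1}$ choices. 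Summing over $p$ and reindexing by $j = p-1$ reproduces $S_n$ term by term. I would present the inductive argument as the main proof and relegate this count to a remark.

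I do not expect a real obstacle: the argument is routine once set up. The only places that need care are the nonstandard normalization $F_0 = F_1 = 1$; checking that the base case $n = 2$ and the ranges of validity of $S_n = 2S_{n-1} + F_{n-2}$ and of $F_{n+1} = 2F_n - F_{n-2}$ are compatible (they are, for $n \ge 3$, with $n = 2$ treated by hand); and, in the bijective variant, correctly handling the degenerate empty prefix when $p = 1$.
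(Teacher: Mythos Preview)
Your proposal is correct, and both routes you sketch are sound. The first-order recursion $S_n = 2S_{n-1} + F_{n-2}$ is easily seen by peeling off the last summand, and the matching computation for $b_n = 2^n - F_{n+1}$ via $F_{n+1} = 2F_n - F_{n-2}$ is valid for $n \ge 3$; with the base case $n=2$ this closes the induction. The bijective count by position of the first occurrence of $11$ also reindexes to $S_n$ exactly as you say.

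The paper's own proof is genuinely different in spirit. Rather than reducing to a clean Fibonacci identity, it stays inside the modular-arithmetic description of the nonzero rows of the column matrix spanning $\im(H)$ (rows lying in residue classes $0,-1,\dots,-(x-1) \pmod{4x}$ for some $x \le 2^{n}/4$) and tracks, by induction on $n$, how many \emph{new} nonzero rows are created when a site is added. After an involved inclusion--exclusion argument the paper arrives at the same recursion you use, namely that the nonzero-row count doubles and picks up an extra $F_{k-1}$, but it extracts this from the geometry of the row positions rather than from the algebraic form of $S_n$. Your approach is shorter and entirely self-contained once Proposition~3.1 is in hand; the paper's approach has the (modest) advantage of giving an independent count of $\dim(\im H)$ directly from the row description, so that in principle it does not need to invoke Proposition~3.1 at all. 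Your bijective variant achieves the same independence more transparently.
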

\begin{proof}
    For this proof, I will use complementary counting to count the rows of the final matrix $H$ that the value of 0 cannot be placed in. In total, the sum of the number of rows of the final matrix that can contain 0 and the number of rows of the final matrix that cannot contain 0 will be $2^n$. From \textbf{Proposition 3.1}, $F_{n+1}$ is the number of rows of the final matrix that can contain 0. Therefore, I only need to find the number of rows that cannot contain 0, which is given by finding the rows that fit any one of the modular equations $0,-1...-(x-1)\pmod{4x}$ for some $x\leq \frac{2^n}{4}$. I will also use induction.

    Base Case: When $n=2$, $x\leq \frac{2^2}{4}=1$. So, the only modular equation for the row to satisfy is $0\pmod 4$. There is only 1 row that satisfies this---row 4. Since $1+F_{2+1}=1+3=2^2$, I can re-arrange this to get $F_3=2^2-1$, which supports the inductive hypothesis.

    Inductive Step: Suppose that $F_{n+1}=2^n-(2^{n-2}F_0+2^{n-3}F_1+...2^0F_{n-2})$ for $n=k$, where $F_{k+1}$ is the number of rows that contain 0 in $H$, $2^k$ is the total sum, and $(2^{k-2}F_0+2^{k-3}F_1+...2^0F_{k-2})$ is the number of rows that do not contain 0 in $H$. Then, for $n=k+1$, the number of rows that satisfy the modular equations from case $n=k$ is doubled since the number of rows in the column matrix is doubled. Another modular equation is also added: $0,-1,...-(2^{k-1}-1)\pmod{2^{k+1}}$. However, some of the rows satisfy multiple modular equations, meaning that I only want to count the new rows that add to our non-zero row count. Therefore, out of the rows $0,-1,...-(2^{k-1}-1)\pmod{2^{k+1}}$, I subtract off all the modular conditions that can already be satisfied by the earlier case $n=k$. There are exactly $2^{k-1}$ modular conditions, so every modular condition set by case $n=k$ will only appear at most once in $0,-1,...-(2^{k-1}-1)\pmod{2^{k+1}}$. However, every row above $-(2^{k-1}-1)$ in the case $n=k$ will not be considered in the new condition for case $n=k+1$ (as there are $2^k$ rows for the case $n=k$), so I must subtract the number of rows fitting case $n=k-1$. Note that this is because all rows above $-(2^{k-1}-1)$ are found by fitting the conditions for the case $n=k-1$. Therefore, the number of new rows that add to my non-zero row count is $2^{k-1}-(2^{k-2}F_0+2^{k-3}F_1+...2^0F_{k-2}-(2^{k-3}F_0+2^{k-4}F_1+...2^0F_{k-3}))=2^{k-1}-(2^{k-3}F_0+2^{k-4}F_1+...2^0F_{k-3}+2^0F_{k-2})=F_{k}-F_{k-2}=F_{k-1}$. Then, for case $n=k+1$, the number of rows that are non-zero is $2\cdot(2^{k-2}F_0+2^{k-3}F_1+...2^0F_{k-2})+F_{k-1}=2^{k-1}F_0+2^{k-2}F_1+...2^0F_{k-1}$. Since $F_{k+2}$ is the number of rows that contain 0 and $2^{k+1}$ is the total number of rows in the matrix spanning \im($H$) for system $H$ with $k+1$ sites, our inductive hypothesis is proven.

    Thus, this theorem holds true.
\end{proof}

\subsection{Recursion and Qubit Basis States}

We have now already defined that $H_v = \sum_{i=1}^{n-1} H_{i,i+1}v$ where $v\in V ^{\otimes n}$. However, since every vector $v$ can be expressed as a linear combination of the basis vectors, I only need to consider when $v$ is a basis vector to create the basis of the kernel.

Now, since only 2 adjacent qubits interact at a time with operator $H_{i,i+1}$, $H_{i,i+1}v = v$ if and only if $v = |...11...>$ when the 1s represent the spins of site $i$ and $i+1$, respectively. Otherwise, $H_{i,i+1}v = 0$ if $v=|...00...>$, $|...01...>$, or $|...10...>$ at sites $i$ and $i+1$ since they map it to 0.

So, $Hv= \alpha v$ where $\alpha$ is the number of adjacent 1s in $v$. So, $\ker H = \{v: Hv = 0\}$ and $\alpha = 0$. Therefore, there is a bijection between the number of sequences of length $n$ such that there are no adjacent ones.

This is a typical recursion problem. For an $n$-digit sequence, there are two possibilities for the last digit: $0$ or $1$. If the last digit is 0, I am reduced to finding the number of $n-1$ digit sequences that do not have adjacent ones. If the last digit is 1, the second to last digit must be a 0 to avoid adjacent ones, and I am reduced to finding the number of $n-2$ digit sequences that do not have adjacent ones. Now, I can take a look at the base cases to see the recursive sequence.

When $n=2$, there are 3 sequences that do not have adjacent ones---namely, $|00>, |01>, $ and $|10>$. When $n=3$, there are 5 sequences that do not have adjacent ones---namely, $|000>, |001>, |010>, |100>,$ and $ |101>$. Then, when $n=4$, the number of sequences that do not have adjacent ones is 3+5=8. When $n=5$, it is $5+8=13$. This is the Fibonacci sequence, where the number of $n$-digit sequences that do not have adjacent ones is $F_{n+1}$. Therefore, the dimension of the kernel on a Hilbert space with $n$ sites is $F_{n+1}$ where $F_0 = 1$.

This second approach provides a much more efficient result and can possibly be applied to future research with other systems $H$. 

However, both approaches provide some insight into quantum lattice models and their ground states.

\section{Ground States Per Site}

Let's denote all the ground states per site as $V_0$. Then, all the ground states in $H$ can be represented by $V_0^{\otimes n}$, so there are ($\dim V_0)^n$ total ground states.

Thus, for our problem, $(\dim V_0)^n=F_{n+1}$, the total number of ground states. So, $\dim V_0 = \sqrt[n]{F_{n+1}}$. From Binet's Formula, $F_n=\frac{(\frac{1+\sqrt5}{2})^n-(\frac{1-\sqrt5}{2})^n}{\sqrt5}$. When $n$ is large, $F_n\approx \frac{\phi^n}{\sqrt{5}}$ where $\phi=\frac{1+\sqrt5}{2}$ (see \cite{Clancy} for full proof). Then, as $n$ converges to infinity, \[\sqrt[n]{F_{n+1}} \approx \lim_{n\rightarrow\infty} \sqrt[n]{\frac{\phi^{n+1}}{\sqrt5}}=\frac{\phi^{\frac{n+1}{n}}}{\sqrt5^{\lim_{n\rightarrow\infty}\frac{1}{n}}}=\frac{\phi}{\sqrt5^0}=\phi.\] Formally, $\dim V_0=\frac{1+\sqrt{5}}{2}$, which means that the number of ground states per site, on average, is somewhere between 1 and 2.

We have now seen an interesting example of a quantum lattice model where despite the number of states on each site being an integer $F_{n+1}$, the number of ground states on each site is an irrational number. It's natural to conjecture that for more general systems, the number of ground states on each site---if not exactly 1 or 2---will be positive algebraic integers, all of whose Galois conjugates have strictly smaller absolute values. Note that no system will have exactly 0 ground states, as there must be some state(s) that have the lowest energy.

\section{More Generalized Systems With $k-$Step Fibonacci Degrees of Degeneracy}
In this section, I will construct lattice models that have a $k-$Step Fibonacci degree of degeneracy and explore some of its properties, including looking into the conjecture for more generalized systems.

\begin{prop}
    Similar to before, systems $H$ where only the basis element $|111...1>\mapsto |111...1>$ and everything else maps to 0 will give a $k-$Step Fibonacci degree of degeneracy, where $k$ is the number of sites interacting at once (or the number of digits in each basis element).
\end{prop}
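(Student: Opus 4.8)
The plan is to follow the second approach of Section 3.2, since the recursion over qubit basis states generalizes cleanly to $k$-site interactions, whereas the modular-arithmetic bookkeeping of Section 3.1 becomes unwieldy for general $k$. First I would observe that the analogue of Proposition 2.1 holds verbatim: each local term $H_{i,i+1,\dots,i+k-1}$ is diagonal in the standard basis, sending a basis vector to itself exactly when sites $i,i+1,\dots,i+k-1$ all carry $|1\rangle$ and to $0$ otherwise. Hence $H=\sum_{i=1}^{n-k+1}H_{i,\dots,i+k-1}$ is diagonal with non-negative integer entries, so it is a non-negative definite Hermitian operator and, exactly as in Section 3, its degree of degeneracy equals $\dim(\ker H)$.

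Next I would identify the kernel combinatorially. For a standard basis vector $v=|\varepsilon_1\varepsilon_2\cdots\varepsilon_n\rangle$ with $\varepsilon_j\in\{0,1\}$, the diagonal action gives $Hv=\alpha v$, where $\alpha$ counts the indices $i$ with $\varepsilon_i=\varepsilon_{i+1}=\cdots=\varepsilon_{i+k-1}=1$, i.e.\ the number of length-$k$ all-ones windows in the string $\varepsilon_1\cdots\varepsilon_n$. Thus $v\in\ker H$ if and only if this string contains no run of $k$ consecutive $1$s, and $\dim(\ker H)$ equals the number $a_n$ of such strings. To count them, note that for $n\ge k$ any admissible string has a $0$ among its first $k$ entries, so it is uniquely of the form $1^{j}0w$ with $0\le j\le k-1$ and $w$ an admissible string of length $n-j-1$; this bijection yields $a_n=a_{n-1}+a_{n-2}+\cdots+a_{n-k}$. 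The base cases are $a_j=2^j$ for $0\le j\le k-1$ and $a_k=2^k-1$, which are consistent with the recursion since $2^{k-1}+2^{k-2}+\cdots+2^0=2^k-1$. This is precisely the defining recursion, with its natural initial conditions, of the $k$-step Fibonacci sequence, and it recovers Proposition 3.1 when $k=2$.

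The main obstacle is not deep but demands care: fixing the correct indexing/normalization of the ``$k$-step Fibonacci'' numbers, verifying that $v\mapsto 1^{j}0w$ really is a bijection (existence and uniqueness of the first $0$ within the initial block of length $k$, and that the tail $w$ is constrained only by avoiding length-$k$ runs), and nailing down the small-$n$ base cases so the recursion is anchored. As an alternative worth a remark, one could instead mirror Section 3.1: each \im$(H_{i,\dots,i+k-1})$ is spanned by a $2^n\times 1$ vector that vanishes outside a $2^{-k}$ fraction of its rows, and $\ker H$ is the orthogonal complement of the span of all these vectors; but turning the overlap pattern of their supports into a clean closed-form count (the $k$-step analogue of Theorem 3.2) is messier than the recursion above, so I would present the recursion as the primary argument and treat the matrix picture only as a secondary remark.
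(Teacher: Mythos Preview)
Your proposal is correct and follows essentially the same approach as the paper: both reduce $\dim(\ker H)$ to the count of length-$n$ binary strings avoiding $k$ consecutive $1$s, derive the recursion $a_n=a_{n-1}+\cdots+a_{n-k}$, and verify the initial values against the $k$-step Fibonacci numbers. The only cosmetic difference is that the paper conditions on the trailing block (last digit $0$, or an ending of the form $01^j$ with $1\le j\le k-1$) while you condition on the leading block (writing the string as $1^j0w$ with $0\le j\le k-1$); these are mirror images of one another and yield the same recursion.
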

\begin{proof}
Like in our original method, I can use recursion to find the degree of degeneracy of the $n$-site system. The problem is equivalent to finding the number of $n$-length sequences of 0s and 1s without a $k$-length sequence of 1s. Now, I can split the problem into cases. If the last digit is 0, I am reduced to finding the number of $n-1$ length sequences without $k$ consecutive 1s (giving us a term of $S_{n-1}$). If the last digit is a 1, I have several different possibilities for ending sequences, all of which are in the form of 011...1. If the length of the ending sequence is 2 (so the ending sequence is 01), I am reduced to finding the number of $n-2$ length sequences without $k$ consecutive ones (giving us a term of $S_{n-2}$). If the length of the ending sequence is 3 (so the ending sequence is 011), I am reduced to finding the number of $n-3$ length sequences without $k$ consecutive ones (giving us a term of $S_{n-3}$). This will continue until the length of the ending sequence is $k$, which can simplify to the problem of finding the number of $n-k$ length sequences without $k$ consecutive ones (giving us a term of $S_{n-k}$). Therefore, the degree of degeneracy of every system $H$ with only the basis element $|111...1>$ mapping to itself will satisfy $S_n=S_{n-1}+S_{n-2}+...+S_{n-k}$. Let me denote the $m$th term of a $k-$Step Fibonacci sequence as $F_{k,m}$. Notice that when the number of sites is $n=k$, there are $2^k-1$ ground states (every possible combination of $k$ digits except $|111...1>)$. Then, since $F_{k,0}=1, F_{k,1}=1, F_{k,2}=2,...F_{k,k-1}=2^{k-2}$, $F_{k, k}=2^{k-1}$ and $F_{k, k+1}=2^k-1$, which is the degree of degeneracy when there are $n=k$ sites in total. Thus, the degree of degeneracy for such systems $H$ with $k$ sites interacting at once would be $F_{k,n+1}$.
\end{proof}

Now, to prove the convergence of the number of ground states per site, it is much easier to work with explicit formulas of the sequences rather than the recursive form. So, I will use the Characteristic Root Technique \cite{levin2015discrete}.

\begin{defin}
    The characteristic polynomial of a recursion $S_n=a_1S_{n-1}+a_2S_{n-2}+...a_kS_{n-k}$ is $x^k-a_1x^{k-1}-a_2x^{k-2}-...-a_k=0.$
\end{defin}
\begin{lemma}[Characteristic Root Technique]
    If the characteristic roots of a linear recurrence are all distinct, the linear recurrence has an explicit formula that can be expressed as
    \begin{center}
        $S_n=\alpha_1r_1^n+\alpha_2r_2^n+...+\alpha_kr_k^n$
    \end{center}
    where every $\alpha_i$ is a constant coefficient and every $r_i$ is a root of the characteristic polynomial of the recurrence. If there are some roots that are equal, let us suppose that we have $$r_s=r_{s+1}=...=r_t.$$ Then, in the original formula, we can replace the term, $$\alpha_sr_s^n+\alpha_{s+1}r_{s+1}^n+...+\alpha_tr_t^n$$ with $$(\alpha_sn^{t-s}+\alpha_{s+1}n^{t-s-1}+...+\alpha_{t-1}\cdot n+\alpha_t)\cdot r_s^n.$$
\end{lemma}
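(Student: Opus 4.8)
The plan is to recast the statement in the language of linear algebra on the space of sequences. Let $W$ be the set of all sequences $(S_n)_{n\ge 0}$ satisfying the recurrence $S_n=a_1S_{n-1}+\cdots+a_kS_{n-k}$ for every $n\ge k$. First I would note that $W$ is a complex vector space and that the ``first $k$ terms'' map $W\to\mathbb{C}^k$, $(S_n)\mapsto(S_0,\dots,S_{k-1})$, is a linear isomorphism: the recurrence determines every later term uniquely, and conversely any choice of $S_0,\dots,S_{k-1}$ extends by running the recurrence forward. Hence $\dim W=k$, and the whole lemma reduces to exhibiting $k$ explicit members of $W$ that are linearly independent — any such family is then a basis, so every solution (in particular one matching prescribed initial conditions) is a unique linear combination of them.

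For the distinct-roots case I would proceed in two steps. First, for each root $r_i$ of $p(x)=x^k-a_1x^{k-1}-\cdots-a_k$, the sequence $(r_i^n)_{n\ge 0}$ lies in $W$: multiplying $p(r_i)=0$, i.e. $r_i^k=a_1r_i^{k-1}+\cdots+a_k$, by $r_i^{n-k}$ is exactly the recurrence at index $n$. Second, the sequences $(r_1^n),\dots,(r_k^n)$ are linearly independent; since they all lie in $W$, it suffices to check independence of their images in $\mathbb{C}^k$, and a relation $\sum_i\alpha_ir_i^n=0$ for $n=0,\dots,k-1$ is the linear system with Vandermonde coefficient matrix, whose determinant $\prod_{i<j}(r_j-r_i)$ is nonzero precisely because the roots are distinct. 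So the family is a basis and $S_n=\alpha_1r_1^n+\cdots+\alpha_kr_k^n$.

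For the repeated-root case I would bring in the shift operator $E$ on sequences, $(ES)_n=S_{n+1}$, so that $S\in W$ is equivalent to $p(E)S=0$. Factoring $p(x)=\prod_j(x-r_j)^{m_j}$ over $\mathbb{C}$ with $\sum_j m_j=k$, the operators $(E-r_j)^{m_j}$ commute, so it is enough to show that a root $r$ of multiplicity $m$ yields $m$ solutions: each $(n^{\ell}r^n)_{n\ge 0}$ with $0\le\ell<m$ is killed by $(E-r)^m$. The key identity is that for any polynomial $q$ one has $(E-r)\bigl(q(n)r^n\bigr)_n=\bigl(r^{n+1}(q(n+1)-q(n))\bigr)_n$, and $q(n+1)-q(n)$ has degree one less than $q$; iterating $m$ times annihilates $r^n$ times any polynomial of degree below $m$. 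This produces $\sum_j m_j=k$ solutions of the advertised shape, and their linear independence is again a Vandermonde-type statement — now the confluent (generalized) Vandermonde determinant, which is once more a nonzero product of differences of the distinct $r_j$. A dimension count then closes the argument.

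The main obstacle I expect is the linear-independence input in the repeated-root case: the confluent Vandermonde determinant is less elementary than the plain one, so I would either cite it or give a short induction (differentiating the ordinary Vandermonde identity in the nodes, or a row reduction grouping the columns by root). A secondary, purely bookkeeping issue is the degenerate possibility $a_k=0$, which makes $0$ a characteristic root; with the convention $0^0=1$ this still fits the pattern, but the lemma as stated should either assume $a_k\neq 0$ or treat it explicitly. For the $k$-step Fibonacci recurrences that motivate the lemma all coefficients equal $1$, so $a_k\neq 0$ and this case does not arise.
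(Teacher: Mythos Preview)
Your proposal is correct and, in fact, far more substantive than what the paper does: the paper's ``proof'' of this lemma consists solely of the sentence ``This is already a well-known result. See [AoPS Wiki] for full proof.'' There is no argument in the paper to compare against.

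Your approach is the standard linear-algebraic one and is sound: identifying the solution set $W$ of the recurrence as a $k$-dimensional vector space via the initial-conditions isomorphism, exhibiting the geometric sequences $(r_i^n)$ as elements of $W$, and invoking the (confluent) Vandermonde determinant for linear independence. The shift-operator formulation $(E-r)^m$ for handling repeated roots is clean and makes the degree-drop calculation transparent. Your flagged caveats (the confluent Vandermonde being less elementary, and the $a_k=0$ edge case) are accurate and appropriately minor; for the paper's intended application to $k$-step Fibonacci recurrences the roots are in any case distinct (the companion polynomial $x^k-x^{k-1}-\cdots-1$ has simple roots, as Lemma~5.2 later establishes), so the repeated-root machinery is not even needed there. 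In short, you have supplied a genuine proof where the paper supplies only a citation.
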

\begin{proof}
This is already a well-known result. See \cite{AoPS Wiki} for full proof.
\end{proof}

\begin{prop}
For systems $H$ where the degree of degeneracy can expressed as a term in a linear recursion, the number of ground states per site will converge to a number between 1 and 2.
\end{prop}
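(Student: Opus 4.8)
The plan is to reduce the statement to two elementary dimension bounds together with the existence of the limit, and to extract that limit from the explicit formula supplied by Lemma~5.1. First I would record that, for any of the systems under consideration, the degeneracy $S_n = \dim(\ker H)$ on $n$ sites is a positive integer satisfying $1 \le S_n \le 2^n$: the lower bound because every Hamiltonian has at least one ground state (as noted in Section~4), and the upper bound because $\ker H \subseteq V^{\otimes n}$ and $\dim V^{\otimes n} = 2^n$. Taking $n$-th roots gives $1 \le \sqrt[n]{S_n} \le 2$ for every $n$, so $\liminf_n \sqrt[n]{S_n} \ge 1$ and $\limsup_n \sqrt[n]{S_n} \le 2$. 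Hence the whole content of the proposition is that $\lim_n \sqrt[n]{S_n}$ actually exists; once it does, it is automatically a number in $[1,2]$, and this limit is by definition $\dim V_0$.

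For existence I would apply Lemma~5.1 (the Characteristic Root Technique) to the governing recursion $S_n = a_1 S_{n-1} + \cdots + a_k S_{n-k}$, writing $S_n = \sum_i p_i(n)\, r_i^n$ where the $r_i$ are the distinct characteristic roots and each $p_i$ is a polynomial (a nonzero constant when $r_i$ is simple). Since the recursions produced by the constructions in this paper have non-negative coefficients and $S_n > 0$ for all $n$, the companion matrix is a non-negative, primitive matrix, so Perron--Frobenius gives a unique characteristic root $r_1 > 0$ of strictly maximal absolute value, appearing with a nonzero coefficient. Then $S_n = p_1(n)\, r_1^n\,(1 + o(1))$, and because $\sqrt[n]{|p_1(n)|} \to 1$ for any fixed polynomial and $\sqrt[n]{1 + o(1)} \to 1$, we conclude $\sqrt[n]{S_n} \to r_1$. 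Combined with the first paragraph this yields $\dim V_0 = r_1 \in [1,2]$. Note that the polynomial pre-factor is harmless under the $n$-th root, so even a repeated dominant root would not affect the value of the limit; what matters is only that the maximal-modulus root is unique.

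I would then sharpen the endpoints for the concrete $k$-step Fibonacci family of Proposition~5.1, where $S_n = S_{n-1} + \cdots + S_{n-k}$. Multiplying the characteristic polynomial $x^k - x^{k-1} - \cdots - 1$ by $(x - 1)$ gives $x^{k+1} - 2x^k + 1$, so the dominant root obeys $r_1^k(2 - r_1) = 1$, which forces $r_1 < 2$; evaluating $x^k - x^{k-1} - \cdots - 1$ at $x = 1$ gives $1 - k < 0$ and at $x = 2$ gives $+1 > 0$, so in fact $1 < r_1 < 2$, with $r_1 \to 2$ as $k \to \infty$. For $k = 2$ this recovers $r_1 = \phi$, consistent with Section~4.

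The main obstacle is the dominance claim: one must rule out the possibility that $\sqrt[n]{S_n}$ merely oscillates between the bounds $1$ and $2$ rather than converging, i.e.\ one must guarantee a characteristic root of strictly largest modulus. For the explicit $k$-step recursions this is a clean primitivity/Perron--Frobenius argument (and can even be checked directly from $x^{k+1} - 2x^k + 1 = 0$), but the proposition as literally stated --- ``any system whose degeneracy satisfies a linear recursion'' --- is too broad, since an arbitrary linear recursion can have several roots of equal maximal modulus. The honest version of the statement should therefore restrict to recursions with non-negative coefficients and primitive companion matrix, which is exactly the class arising from the lattice-model constructions here; I would state and prove it in that form.
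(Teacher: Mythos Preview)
Your argument is correct and takes a genuinely different route from the paper. Both start from Lemma~5.1, but diverge at the dominance step. The paper never invokes Perron--Frobenius; instead it expands the explicit formula, pairs complex-conjugate roots into trigonometric terms $b_i m_i^n\cos(n\theta_i)+d_i m_i^n\sin(n\theta_i)$, and then runs a three-way case split on whether any complex pair has modulus below, equal to, or above the largest real root $r_l$. In the last case it argues that positivity and reality of $S_n$ force the imaginary part to vanish and $\cos(n\theta)\ge 0$ for every $n$, hence $\theta=0$, so the offending root was actually real; in the equal-modulus case it simply asserts the limit is still $r_l$. Your Perron--Frobenius approach replaces this somewhat delicate case analysis with a single structural fact: primitivity of the non-negative companion matrix yields a strict spectral gap and a positive dominant eigenvector, after which $\sqrt[n]{|p_1(n)|}\to 1$ finishes the job. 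What you gain is rigor and brevity; what the paper's bare-hands argument gains is that it does not import an outside theorem, though at the cost of the hand-waving you correctly flag in your final paragraph. You also make explicit something the paper glosses over, namely that ``any linear recursion'' is too broad and one must restrict to the non-negative, primitive case actually produced by the lattice constructions. Your third paragraph, sharpening to $1<r_1<2$ for the $k$-step family via $x^{k+1}-2x^k+1$, is not needed for this proposition and in fact previews the paper's Proposition~5.3 and Lemma~5.2; you can safely drop it here or mark it as a forward reference.
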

\begin{proof}
    From \textbf{Lemma 5.1}, every term of a linear recurrence where the characteristic polynomial has distinct roots can be expressed by $a_1r_1^n+a_2r_2^n+...+a_kr_k^n$ where $r_i$ are roots of the characteristic polynomial. Since the characteristic polynomials of all such systems $H$ have integer coefficients, if there is an imaginary number root, the conjugate imaginary number is also a root. Therefore, suppose that I have some root $r_i=a+bi=m_i(\cos(\theta)+i\sin(\theta))$. Then, I will let some other root $r_j$ be $a-bi=m_i(\cos(\theta)-i\sin(\theta))$. So,
    $$a_ir_i^n+a_jr_j^n=a_im_i^n((\cos(n\theta)+i\sin(n\theta))+a_jm_i^n(\cos(-n\theta)+i\sin(-n\theta)).$$
    Therefore, I can rewrite the explicit form of the recurrence as $$a_1r_1^n+...+a_lr_l^n+b_1m_1^n\cos(n\theta_1)+id_1m_1^n\sin(n\theta_1)+...+b_pm_p^n\cos(n\theta_p)+d_pm_p^n\sin(n\theta_p)$$
    where $b_i=a_i+a_j.$
    
    Now, I can first suppose that $|r_1|<|r_2|<...<|r_l|$ and $|m_i|<|r_l|$ for all $i=1, 2, ... p$. Then, $r_1=c_1r_l, r_2=c_2r_l,...r_{l-1}=c_{l-1}r_l$ for all $|c_i|<1.$ I can also get that $m_1=e_1r_l, m_2=e_2r_l, ... m_p=e_pr_l$ for all $|e_i|<1$. Then,
    $$\sqrt[n]{S_n}=\sqrt[n]{(\sum_{i=1}^{l-1} a_ic_i^n + a_l+\cos(n\theta)\sum_{i=1}^{p} b_ie_i^n+i\sin(n\theta)\sum_{i=1}^{p} d_ie_i^n)r_l^n}.$$
    Since $c_i^n, e_i^n\rightarrow 0$ as $n\rightarrow\infty$, 
    $$\lim_{n\rightarrow\infty} \sqrt[n]{S_n}=\lim_{n\rightarrow\infty} \sqrt[n]{a_lr_l^n}=r_l.$$

    If $|m_i|=|r_l|$ for some $i=1, 2, ...p$, \begin{align*}
        \lim_{n\rightarrow\infty}\sqrt[n]{S_n}&=\lim_{n\rightarrow\infty}\sqrt[n]{(\sum_{i=1}^{l-1} a_ic_i^n + a_l+\cos(n\theta)\sum_{i=1}^{p} b_ie_i^n+i\sin(n\theta)\sum_{i=1}^{p} d_ie_i^n)r_l^n}\\
        &=\lim_{n\rightarrow\infty}\sqrt[n]{(a_l+\cos(n\theta)\sum_{i=1}^{p} b_i+i\sin(n\theta)\sum_{i=1}^{p} d_i)r_l^n}\\
        &=r_l.
    \end{align*}
    If $|m_i|>|r_l|$ for some $i=1, 2, ...p$ (so $|e_i|>1$), keep in mind that 
    $$\lim_{n\rightarrow\infty} \sum_{i=1}^{l-1} a_ic_i^n + a_l+\cos(n\theta)\sum_{i=1}^{p} b_ie_i^n+i\sin(n\theta)\sum_{i=1}^{p} d_ie_i^n\geq 0$$since the average number of ground states per site must be positive and real. Because of this condition, also note that $i\sin(n\theta)\sum_{i=1}^{p} d_ie_i^n=0$. However, if $\cos(n\theta)<0$ for some $n$, this expression becomes negative for some $360x+n$. The only way for $\cos(n\theta)$ to be greater than or equal to 0 for all integers $n$ would be if $\cos(\theta)=1$, which means that the root would be considered a real root $r_k$ where $|r_k|>r_l$.
    
    Then, since the number of ground states per site converges and the total number of ground states must be between 1 and $2^n$, the number of ground states per site must converge to a number between 1 and 2.

    Using similar reasoning, I can also prove the fact for systems $H$ whose characteristic polynomials have repeated roots (with the extra terms with powers of $n$, note that $\lim_{n\rightarrow\infty}(n^x)^{\frac{1}{n}}=1$ by L'Hopital's for some constant $x$).
\end{proof}

\begin{prop}
    The average number of ground states per site for every $k-$Step Fibonacci recursion can be represented by $x$ where $x$ satisfies $x^k-x^{k-1}-x^{k-2}-...-1=0$.
\end{prop}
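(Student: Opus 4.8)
The plan is to read off the answer from \textbf{Proposition 5.1}, \textbf{Lemma 5.1}, and \textbf{Proposition 5.2}, with one genuine piece of work: showing that the polynomial $p_k(x) := x^k - x^{k-1} - \cdots - x - 1$ has a single root of maximal modulus. By \textbf{Proposition 5.1}, the degree of degeneracy of the $k$-step system on $n$ sites is $F_{k,n+1}$, a term of the recursion $S_n = S_{n-1} + S_{n-2} + \cdots + S_{n-k}$, whose characteristic polynomial is exactly $p_k$. As in Section 4, if $V_0$ denotes the space of ground states per site then $(\dim V_0)^n = F_{k,n+1}$, so the quantity to identify is $\dim V_0 = \lim_{n\to\infty} \sqrt[n]{F_{k,n+1}}$. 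By \textbf{Lemma 5.1}, $F_{k,n+1} = \sum_i \alpha_i r_i^{\,n}$ for the roots $r_i$ of $p_k$ (with the usual modification if there are repeated roots), and by the argument of \textbf{Proposition 5.2} this $n$-th root converges to the root of largest absolute value. So it remains to pin that root down.

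First I would locate the real root and its size. Since $p_k(1) = 1 - k < 0$ and $p_k(2) = 2^k - (2^{k-1} + \cdots + 1) = 1 > 0$, there is a real root $\phi_k \in (1,2)$. Multiplying by $x - 1$ turns $p_k$ into the more convenient $P_k(x) := (x-1)p_k(x) = x^{k+1} - 2x^k + 1$; checking the single sign change of $P_k'$ on $[0,\infty)$ shows $P_k(t) < 0$ precisely for $t \in (1,\phi_k)$ and $P_k(t) > 0$ on $[0,1) \cup (\phi_k,\infty)$, so $\phi_k$ is the only real root of $p_k$ outside $[-1,1]$.

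The heart of the argument is then: every root $z$ of $p_k$ with $z \ne \phi_k$ satisfies $|z| < 1$. Every such $z$ is also a root of $P_k$, so $2z^k = z^{k+1} + 1$, and the triangle inequality gives $2|z|^k \le |z|^{k+1} + 1$, i.e. $P_k(|z|) \ge 0$, forcing $|z| \le 1$ or $|z| \ge \phi_k$. On the other hand $z^k = z^{k-1} + \cdots + z + 1$; if $|z| > 1$ and $z \ne \phi_k$ this triangle inequality is \emph{strict} (equality would make every $z^j$ a nonnegative real, hence $z$ a positive real root, i.e. $z = \phi_k$), giving $|z|^k < \frac{|z|^k - 1}{|z| - 1}$, equivalently $P_k(|z|) < 0$, hence $1 < |z| < \phi_k$ --- contradicting the previous line. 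So $|z| \le 1$; and $|z| = 1$ is impossible because $2z^k = z^{k+1} + 1$ with $|z^{k+1}| = 1$ would force $z^{k+1} = 1$, then $z^k = 1$, then $z = 1$, while $p_k(1) \ne 0$. Hence $\phi_k$ is the unique dominant root. Finally, since all other roots lie in the open unit disk, the coefficient $\alpha_l$ of $\phi_k^{\,n}$ cannot be $0$ (else $F_{k,n+1} \to 0$, impossible as $F_{k,n+1} \ge 1$) and must be positive, so $\sqrt[n]{F_{k,n+1}} \to \phi_k$. Therefore $\dim V_0 = \phi_k$, a root of $x^k - x^{k-1} - \cdots - 1 = 0$ lying in $(1,2)$, as claimed --- the value $2$ being approached as $k \to \infty$.

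I expect the main obstacle to be exactly the middle paragraph: establishing that $p_k$ has a unique root of maximal modulus (equivalently, that the $k$-bonacci constant $\phi_k$ is a Pisot number). This is classical, but writing it cleanly requires the $(x-1)$-trick together with careful handling of the equality cases in the triangle inequality and of the boundary $|z| = 1$. Everything else --- the recursion from \textbf{Proposition 5.1}, the explicit formula from \textbf{Lemma 5.1}, and the limit computation as in \textbf{Proposition 5.2} --- is bookkeeping.
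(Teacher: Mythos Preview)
Your argument is correct, but it does considerably more than the paper's own proof of this proposition. The paper's proof of Proposition~5.3 is two lines: it simply invokes \textbf{Proposition~5.2} to get convergence of $\sqrt[n]{S_n}$ to some number in $[1,2]$, then observes informally that if $x=\lim\sqrt[n]{S_n}$ then ``$x^n=S_n$'' turns the recursion $S_n=S_{n-1}+\cdots+S_{n-k}$ into $x^k=x^{k-1}+\cdots+1$, and since this has a unique positive real root, that root must be $x$. The Pisot property --- that every other root of $p_k$ lies strictly inside the unit disk --- is deferred to a separate \textbf{Lemma~5.2}, where the paper proves it via Rouch\'e's theorem applied to $x^{k+1}-2x^k+1$. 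You instead fold the Pisot argument directly into the proof of the proposition and establish it by elementary triangle-inequality manipulations on $P_k(|z|)$, avoiding Rouch\'e entirely. What you gain is a self-contained and more rigorous proof (in particular, your handling of the coefficient $\alpha_l\ne 0$ plugs a gap the paper leaves implicit); what the paper gains is a cleaner separation of concerns, with the convergence statement and the Pisot property stated and proved independently.
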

\begin{proof}
    By the \textbf{Proposition 5.2}, the number of ground states per site must converge to a number between 1 and 2. Therefore, I can set the number as $x$.

    Recall that every $k-$Step Fibonacci recursion satisfies
    \begin{center}
        $S_n=S_{n-1}+S_{n-2}+...+S_{n-k}$.
    \end{center}
    Since $x=\sqrt[n]{S_n}$ when $n\rightarrow \infty$, $x^n=S_n, x^{n-1}=S_{n-1},...x^{n-k}=S_{n-k}$ when $n\rightarrow\infty$. Therefore, I converted the recursion into a polynomial equation, where
    \begin{center}
        $x^n=x^{n-1}+x^{n-2}+...+x^{n-k}$. 
    \end{center}
    Since the number of ground states per site, $x$, is between 1 and 2 and there is clearly only 1 possible positive real root, $x$ must be the positive real root of
    \begin{center}
        $x^k-x^{k-1}-x^{k-2}-...-1=0$.
    \end{center}
\end{proof}
\begin{defin}
    A Pisot number is a number $v$ where $|v|>1$ whose Galois conjugates all lie within the disk, $|z|<1$.
\end{defin}
\begin{defin}
    A Pisot polynomial is an irreducible polynomial of a Pisot number.
\end{defin}

\begin{lemma}
    For every polynomial $x^k-x^{k-1}-x^{k-2}-...-1=0$ where $k\geq 2$, the absolute values of the Galois conjugates of the number of ground states per site are less than 1.
\end{lemma}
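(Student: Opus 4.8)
The plan is to prove something slightly stronger and purely algebraic: the polynomial $p(x) = x^k - x^{k-1} - \cdots - x - 1$ has exactly one root of absolute value at least $1$, namely the positive real root $\phi_k$ isolated in \textbf{Proposition 5.3}. Granting this, the minimal polynomial of $\phi_k$ over $\mathbb{Q}$ divides $p$ (as $p$ has rational coefficients and $\phi_k$ as a root), so every Galois conjugate of $\phi_k$ is a root of $p$ other than $\phi_k$ itself, and hence lies strictly inside the unit disk — which is exactly the claim, and incidentally certifies that $\phi_k$ is a Pisot number.

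First I would clear denominators via the telescoping identity $(x-1)\,p(x) = x^{k+1} - 2x^k + 1 =: q(x)$, verified by direct expansion. Since $q(1) = 0$ while $p(1) = 1 - k \neq 0$ for $k \geq 2$, the roots of $p$ are precisely the roots of $q$ other than $x = 1$. I would then study the real function $t \mapsto q(t)$ on $(0,\infty)$: from $q'(t) = t^{k-1}\bigl((k+1)t - 2k\bigr)$ it is decreasing on $(0,\tfrac{2k}{k+1})$ and increasing afterward, with $\tfrac{2k}{k+1} \in (1,2)$; combined with $q(0) = 1 > 0$, $q(1) = 0$, and $q(2) = 1 > 0$, this shows $q$ has exactly two positive real roots, $1$ and some $\phi_k \in (1,2)$, with $q(t) > 0$ on $(0,1) \cup (\phi_k,\infty)$ and $q(t) < 0$ on $(1,\phi_k)$.

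Next I would take an arbitrary root $z$ of $p$ with $|z| \geq 1$ and force $z = \phi_k$. If $|z| = 1$, then $z$ is a root of $q$ on the unit circle, so $z^{k+1} + 1 = 2z^k$; taking moduli and invoking the equality case of the triangle inequality forces $z^{k+1} = 1$, hence $2z^k = 2$, hence $z^k = 1$ and $z = z^{k+1}/z^k = 1$, contradicting $p(1) \neq 0$. If $|z| > 1$, the relation $z^k = z^{k-1} + \cdots + z + 1$ and the triangle inequality give $|z|^k \leq \sum_{j=0}^{k-1}|z|^j = \tfrac{|z|^k - 1}{|z|-1}$, which rearranges to $q(|z|) \leq 0$; by the shape of $q$ this means $1 < |z| \leq \phi_k$. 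The open range $1 < |z| < \phi_k$ is impossible, since there $q(|z|) < 0$ gives $\tfrac{|z|^k-1}{|z|-1} < |z|^k$, contradicting the inequality just obtained. So $|z| = \phi_k$, the triangle inequality $\bigl|\sum_{j=0}^{k-1} z^j\bigr| \leq \sum_{j=0}^{k-1}|z|^j$ is an equality, and because the summand $1$ is nonzero every power $z^j$ must be a nonnegative real; in particular $z = z^1 \geq 0$, so $z$ is a positive real root of $p$ and therefore $z = \phi_k$.

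The main obstacle — and the only place the \emph{strict} inequality in the conclusion is genuinely needed — is the borderline analysis at $|z| = \phi_k$ and $|z| = 1$: the modulus estimate alone only gives $|z| \leq \phi_k$, so one must use the equality case of the triangle inequality to exclude the possibility of a conjugate sitting on the circle $|z| = \phi_k$ (or on the unit circle). Everything else is the elementary calculus of $q$ on $(0,\infty)$ and routine rearrangement, so I would keep those parts terse.
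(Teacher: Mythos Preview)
Your reduction to $q(x)=x^{k+1}-2x^k+1$ and your handling of the two boundary cases $|z|=1$ and $|z|=\phi_k$ via the equality case of the triangle inequality are both correct (and the $|z|=1$ case matches the paper almost verbatim). The gap is in the middle case $1<|z|<\phi_k$. You claim that on this range ``$q(|z|)<0$ gives $\tfrac{|z|^k-1}{|z|-1}<|z|^k$,'' but the implication goes the other way: for $t>1$,
\[
q(t)<0 \iff t^{k+1}-2t^k+1<0 \iff t^k(t-1)<t^k-1 \iff t^k<\frac{t^k-1}{t-1}.
\]
So on $(1,\phi_k)$ you in fact have $|z|^k<\tfrac{|z|^k-1}{|z|-1}$, which is \emph{consistent} with the triangle-inequality bound $|z|^k\le \tfrac{|z|^k-1}{|z|-1}$ you derived, not in contradiction with it. There is therefore no contradiction, and nothing in your argument rules out a complex root of $p$ sitting in the annulus $1<|z|<\phi_k$.

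This is not a typo you can flip; the modulus estimate $|z^k|\le\sum_{j<k}|z|^j$ genuinely only yields $q(|z|)\le 0$, i.e.\ $|z|\le\phi_k$, and carries no information separating the open annulus from its outer boundary. The paper closes this gap by a root-counting step you are missing: it applies Rouch\'e's theorem with $f(x)=-2x^k$ and $g(x)=x^{k+1}+1$ on circles $|x|=m$ with $m\downarrow 1$ to conclude that $q$ has exactly $k$ zeros in the closed unit disk, then uses your $|z|=1$ analysis to see the only one on the circle is $x=1$, leaving $k-1$ strictly inside and hence forcing the sole remaining zero of $p$ outside to be $\phi_k$. Some analytic root-count of this kind (Rouch\'e, the argument principle, or an equivalent) is needed; your calculus of $q$ on the positive real axis locates the real roots but cannot by itself control complex roots in $1<|z|<\phi_k$.
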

\begin{proof}
    This proof is already a well-established result. However, I will include the proof below as a courtesy for the reader \cite{cuculieredoddlordmattics}.

    Because $x^k-x^{k-1}-x^{k-2}-...-1=0=x^k-\frac{x^k-1}{x-1}$ where $k\geq 2$ represents a $k-$Step Fibonacci recursion that has a root between 1 and 2 (1 excluded since $k\geq 2$ but 2 included), multiplying by $x-1$ will not affect the number of roots lying in the disk. Instead, one point at $x=1$ will be added. Therefore, we can look at
    \begin{center}
        $x^{k+1}-2x^k+1=0$
    \end{center}
    and exclude the point at $x=1$.

    Let us set functions $g=x^{k+1}+1$ and $f=2x^k$. Since $|g(x)|<|f(x)|$ when $|x|=m$ for all numbers $m$ slightly greater than 1, $|g(x)|<|f(x)|$ when $m\rightarrow 1$. By Rouché's Theorem (see \cite{Monard2017} for proof), the same number of roots of $g(x)+f(x)$ lie in $|x|\leq1$ as the number of roots of $f(x)$ in $|x|\leq 1$. So, $x^{k+1}-2x^k+1$ has $k$ roots (since $2x^k$ has $k$ roots) in the disk. Suppose that we have some roots $r$ such that $|r|=1$. Then, for every such $r$, it must satisfy the polynomial $r^{k+1}-2r^k+1=0.$ So, $$|r^{k+1}+1|=2|r|^k=2.$$ Because $|r^{k+1}+1|\leq|r|^{k+1}+|1|=2$ by the Triangle Inequality, $$|r^{k+1}+1|=|r^{k+1}|+|1| \textnormal{ and } r^{k+1}=|r^{k+1}|=1.$$ Since $r^{k+1}-2r^k+1=0$, $r^k=1$ too. So, $$r=r^{k+1}/r^k=1.$$ Thus, the only possible value of $r$ when $|r|=1$ is 1. However, since there are no positive real roots other than the one that lies outside $|x|=1$ in the original polynomial, the only point that lies on $|x|=1$ is not a root of $x^n-x^{n-1}-x^{n-2}-...-1=0$. All the other roots of the original polynomial will then lie in the disk $|x|<1$. So, the absolute values of the Galois conjugates of the number of ground states per site for the system are less than 1.
\end{proof}
Since the absolute value of the Pisot root is greater than all the absolute values of its Galois conjugates, I proved the conjecture for all systems that give $k-$Step Fibonacci recursive sequences.

These results can raise some natural questions regarding Pisot numbers and the degrees of degeneracy of more general systems. First of all, can we construct systems that can give all possible Pisot roots between 1 and 2? Conversely, can we use Pisot roots between 1 and 2 to construct all possible systems $H$? If this second statement is true, we will have proved our conjecture.
\section{Conclusion}
Lattice models, with their versatile properties and broad applicability, are an indispensable asset in mathematical physics, contributing uniquely to our understanding of diverse physical systems and their underlying mechanisms.

In this paper, we see examples of quantum lattice models where the degree of degeneracy of the system is a $k-$Step Fibonacci number and the average number of ground states per site is a positive algebraic integer whose Galois conjugates have strictly smaller absolute values. More specifically, I explored a connection between Pisot roots and the average number of ground states per site, raising multiple questions regarding the use of this connection in finding Pisot numbers between 1 and 2 or constructing possible systems of $H$. These questions can be explored in future works.

\section{Acknowledgements}
I would like to thank Kai Xu for his guidance and support throughout the entirety of this project. His insight and mentorship were invaluable through the process, shaping my work and research regarding this topic.

\clearpage

\end{document}